\newcommand{\dP}{\mathrm{P}}
\newcommand{\dQ}{\mathrm{Q}}
\newcommand{\bPP}[1]{{\dP_{#1}}}
\newcommand{\bQQ}[1]{{\dQ_{#1}}}
\newcommand{\bPr}[1]{{\mathbb{P}}\left(#1\right)}
\newcommand{\bP}[2]{\mathrm{P}_{#1}\left({#2}\right)}
\newcommand{\bQ}[2]{\mathrm{Q}_{#1}\left({#2}\right)}
\newcommand{\cA}{{\mathcal A}}
\newcommand{\cC}{{\mathcal C}}
\newcommand{\cF}{{\mathcal F}}
\newcommand{\cM}{{\mathcal M}}
\newcommand{\cO}{{\mathcal O}}
\newcommand{\cP}{{\mathcal P}}
\newcommand{\cU}{{\mathcal U}}
\newcommand{\cX}{{\mathcal X}}
\newcommand{\ep}{\varepsilon}
\newcommand{\indicator}{{\mathds{1}}}
\newcommand{\mc}{-\!\!\!\!\circ\!\!\!\!-}
\newtheorem{theorem}{Theorem}
\newtheorem*{corollary*}{Corollary}
\newtheorem{lemma}[theorem]{Lemma}
\newtheorem*{lemma*}{Lemmas}
\theoremstyle{remark}
\newtheorem*{remark*}{Remark}
\newtheorem*{remarks*}{Remarks}
\theoremstyle{definition}
\newtheorem{definition}{Definition}
\newtheorem{example}{Example}
\newcommand{\win}{\omega}
\newcommand{\numwin}{N_{\omega}}
\begin{document}

\setlength\textfloatsep{0pt}

\title{A New Proof of Nonsignalling Multiprover Parallel Repetition
  Theorem}

\author{ \IEEEauthorblockN{Himanshu Tyagi$^\dag$} \and
  \IEEEauthorblockN{Shun Watanabe$^\ddag$} }

\maketitle

{\renewcommand{\thefootnote}{}\footnotetext{
%\hspace*{-.11in}\rule{24ex}{.05em}
\noindent$^\dag$Department of Electrical Communication Engineering,
Indian Institute of Science, Bangalore 560012, India.  Email:
htyagi@ece.iisc.ernet.in.

\noindent$^\ddag$Department of Computer and Information Sciences,
Tokyo University of Agriculture and Technology, Tokyo 184-8588, Japan.
Email: shunwata@cc.tuat.ac.jp.  }}

\maketitle

\renewcommand{\thefootnote}{\arabic{footnote}}
\setcounter{footnote}{0}

\begin{abstract}
We present an information theoretic proof of the
nonsignalling multiprover parallel repetition theorem, a recent
extension of its two-prover variant that underlies many hardness of
approximation results. The original proofs used de Finetti type
decomposition for strategies. We present a new proof that is based on
a technique we introduced recently 
for proving strong converse results in multiuser
information theory and entails a 
change of measure after replacing hard information constraints with soft ones.
\end{abstract}
%%%%%%%%%%%%%%%%%%%%%%%%%%%%%%%%%%%%%%%%%%%%%%%

\section{Introduction}
The parallel repetition theorem is an important tool in theoretical
computer science which is used to prove hardness of approximation
results. It shows roughly that if distributed provers can
satisfy a random predicate with probability $v<1$ without coordinating,
then they can satisfy $n$ independent copies of the same predicate
only with probability going to $0$ exponentially in $n$. Such a
theorem for two-prover case was shown in~\cite{Raz98}, with a
simplified proof given in~\cite{Hol09}. The precise form of the
statement of such a theorem relies on the structure of the query distribution, the predicate,
and the class of strategies allowed for the provers. In particular, it
was noted in~\cite{Hol09} that in most applications we only need a
parallel repetition theorem for nonsignalling strategies, a class of
correlation that subsumes even quantum correlations. 

While the validity of a multiprover parallel repetition theorem for the standard
setting is unclear, recently such a theorem has been proved for the
nonsignalling setting~\cite{LancienWinter16} (see, also,
\cite{ArnRenVid16, BuhFehSch14}). The proof uses de Finetti type decomposition of
strategies and a linear programming interpretation of the value
function. In this paper, we provide a new proof of the same result
that is completely ``information theoretic". Our proof draws on the
connection between the parallel repetition setting and that of multiuser rate-distortion
theory. In particular, we rely on a change of measure approach for proving strong converse
results in multiuser distributed coding problems. This approach was introduced for centralized coding problems in \cite{GuEff09},
and was recently sophisticated and extended to distributed coding problems in~\cite{TyaWat18} 
using a relaxation technique introduced in \cite{Ooh16}; see \cite{TyaWat18} for detailed account.  
In the change of measure approach, we first
replace the hard information constraints involving conditional
independence by their soft counterparts involving bounds on KL
divergences. Next, we change measure to that obtained by conditioning
on the ``winning'' event. The $n$-fold problem is related to a single
instance of the problem using a tensorization property of the
resulting value function.

This paper is part review -- we recall the formulation and
results for two provers in Section~\ref{s:two-prover}, followed by
those for the multiprover setting in Section~\ref{s:multiprover}. Our main contribution is a new
proof of the multiprover parallel repetition theorem
(Theorem~\ref{t:multiprover_PRT}) given in Section~\ref{s:proof}. The
final section contains brief concluding remarks.

{\em Notation.} Given random variable $(X_1, ..., X_m)$, for a subset
$\cA$ of $\{1, ..., m\}$, we
abbreviate the random variable $(X_i, i\in \cA)$ as $X_\cA$. Similarly, for a tupple $(x_1, ..., x_m)$, denote $x_\cA= (x_i, i\in \cA)$.
For other notations, we basically follow \cite{CsiKor11}.

%%%%%%%%%%%%%%%%%%%%%%%%%%%%%%%%%%%%%%%%%%%%%%%
\section{Two-Prover Parallel Repetition Theorem}\label{s:two-prover}

We begin by reviewing the two-prover setting. A two-prover game $G$
consists of a verifier and two-provers $\cP_1$ and $\cP_2$.  The
verifier samples a query $(X_1,X_2)$ according to a fixed joint
distribution $\bPP{X_1X_2}$ on finite alphabet $\cX_1 \times \cX_2$,
and sends $X_1$ and $X_2$ to $\cP_1$ and $\cP_2$, respectively.  Upon
receiving the queries, $\cP_1$ and $\cP_2$ send responses $U_1 \in
\cU_1$ and $U_2 \in \cU_2$, respectively, where $U_i$ depends only on
$X_i$.  They may use any mappings $f_i$, $i=1,2$, of $X_i$ to get
$U_i$; for finite sets $\cU$ and $\cX$, denote by $\cF(\cU|\cX)$ the
set of all mappings from $f:\cX\to \cU$. The provers win the game if
$\win(X_1,X_2,U_1,U_2)=1$ for a prespecified predicate $\win:\cX_1\times\cX_2 \times \cU_1 \times \cU_2 \to \{0,1\}$.
We will represent the game $G$ by the pair $(\bPP{X_1X_2}, \win)$.
The goal of the provers is to choose mappings $(f_1, f_2)$ that maximize the winning
probability. This maximum winning probability is termed the {\em value
  of the game} and is given by
\begin{align*}
\rho(G) &:= \max\big\{ \mathbb{E}[\win(X_1,Y_1,f_1(X_1),f_2(X_2))] :
\\ &\hspace{20mm} f_1 \in \cF(\cU_1|\cX_1),f_2\in \cF(\cU_2|\cX_2)
\big\}.
\end{align*} 
In $n$ parallel repetitions of the game, the verifier samples
sequences of queries $X_1^n$ and $X_2^n$ according to the product
distribution $\mathrm{P}_{X_1X_2}^n$. The provers now respond with sequences
$U_1^n \in \cU_1^n$ and $U_2^n \in \cU_2^n$ where $U_i^n$ depends only
on $X_i^n$, $i=1,2$. They win the game if predicates for each
coordinate are satisfied, namely the predicate $\win^{\wedge n}$ for the
parallel repetition game $G^{\wedge n}$ is given by
\begin{align*}
\win^{\wedge n}(x_1^n,x_2^n,u_1^n,u_2^n) := \bigwedge_{j=1}^n
\win(x_{1j},x_{2j},u_{1j},u_{2j}),
\end{align*}
where $\bigwedge$ denotes the AND function. The value of $G^{\wedge n}$ is
defined similarly as follows:
\begin{align*}
\rho(G^{\wedge n}) &:= \max\big\{ \mathbb{E}[\win^{\wedge
    n}(X_1^n,X_2^n,f_1(X_1^n),f_2(X_2^n))] : \\ &\hspace{20mm} f_1 \in
\cF(\cU_1^n|\cX_1^n), f_2 \in \cF(\cU_2^n|\cX_2^n) \big\}.
\end{align*}
As a simple attempt towards winning the parallel repetition game,
provers may simply apply strategies for single instance of the game
across each coordinate. In fact, they may use a different strategy for
each coordinate. Clearly, any such attempt will have value less
than $\rho(G)^n$. But can they do better by using other functions
$f_i$ that take into account the entire vector $X_i^n$ and do not have
a product structure across coordinates? At a high level, a parallel
repetition theorem says that the answer is no: The exponential decay
of value with $n$ is unavoidable.

The first instance of parallel repetition theorem was shown by Raz~\cite{Raz98} (see \cite{Hol09} for simpler proof). 
\begin{theorem}[\cite{Raz98}] \label{theorem:standard}
There exists a function $C:[0,1]\to [0,1]$ satisfying $C(t)<1$ if
$t<1$ such that for any game $G$,
\begin{align*}
\rho(G^{\wedge n}) \le C(\rho(G))^{- \frac{n}{\log |\cU_1||\cU_2|}}.
\end{align*}
\end{theorem}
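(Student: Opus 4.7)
The plan is to adapt the information-theoretic reduction of Holenstein, which sharpens Raz's original argument. Fix a pair $(f_1, f_2)$ achieving the maximum in $\rho(G^{\wedge n})$, let $U_i^n = f_i(X_i^n)$, and write $W$ for the event that $\win$ holds on every one of the $n$ coordinates; set $\varepsilon = \bPr{W} = \rho(G^{\wedge n})$. The aim is to build from $(f_1, f_2)$ a single-instance strategy whose winning probability strictly exceeds $\rho(G)$ unless $\varepsilon$ is already exponentially small in $n/\log|\cU_1||\cU_2|$; since the former is impossible by the definition of $\rho(G)$, this contradiction yields the theorem for an appropriately chosen $C(\cdot)$.

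The central step is a coordinate-selection argument. From the identity $\KL{\bPP{X_1^n X_2^n U_1^n U_2^n \mid W}}{\bPP{X_1^n X_2^n U_1^n U_2^n}} = \log(1/\varepsilon)$ and the chain rule for KL divergence, I decompose the log-ratio as a sum of per-coordinate conditional divergences. An averaging argument then locates an index $j\in\{1,\dots,n\}$ and an auxiliary subset $T\subseteq\{1,\dots,n\}\setminus\{j\}$ such that, after conditioning on the event $W_T$ that $\win$ holds on every coordinate in $T$, (i) the conditional winning probability at coordinate $j$ exceeds $\rho(G) + \Omega(1)$ whenever $\varepsilon$ lies above the desired threshold, and (ii) conditional on prover~$i$'s view and $W_T$, the distribution of $X_{(3-i)j}$ is close in total variation to its unconditional counterpart $\bPP{X_{3-i}}$. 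The per-coordinate information budget available for (ii) is at most $O((\log|\cU_1||\cU_2|)/n)$, since the two provers' responses together carry at most $n\log|\cU_1||\cU_2|$ bits of cross-information; this is the source of the $\log|\cU_1||\cU_2|$ factor in the exponent of $C$.

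To convert (i)--(ii) into an actual single-game strategy, I would invoke a correlated-sampling lemma. Given a fresh query $(X_1, X_2)\sim \bPP{X_1X_2}$ from the verifier, the provers use shared randomness to jointly simulate a draw of $(X_{1T}, X_{2T}, U_{1T}, U_{2T})$ from its conditional law given $W_T$ and given $(X_{1j}, X_{2j}) = (X_1, X_2)$. Property (ii) guarantees that each prover can produce its share of this sample from its own input alone, with the total-variation error controlled by Pinsker's inequality applied to the KL bounds coming from the chain-rule decomposition. Running $(f_1, f_2)$ on the simulated configuration and returning the $j$-th outputs then yields a single-instance strategy whose winning probability is at least the conditional value at $j$ from (i), minus the correlated-sampling loss.

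The main obstacle is the correlated-sampling step, since the joint conditional distribution the provers must simulate is not generally a product of their two marginals; the chain-rule bookkeeping has to deliver enough near-conditional-independence at the chosen $j$ that a non-interactive joint sample is possible with vanishing loss. Balancing the $\Omega(1)$ slack in (i) against the square-root Pinsker loss arising from (ii) produces a strictly positive gap between the constructed single-instance winning probability and $\rho(G)$ whenever $\varepsilon$ exceeds the prescribed exponential threshold, closing the contradiction and giving an explicit $C:[0,1]\to[0,1]$ with $C(t)<1$ for $t<1$.
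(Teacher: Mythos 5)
The paper does not supply a proof of Theorem~\ref{theorem:standard}; it is quoted from~\cite{Raz98}, with~\cite{Hol09} mentioned only as the source of a simplified argument. There is therefore no ``paper's proof'' here to compare against. What you have sketched is indeed the skeleton of Holenstein's simplification of Raz's argument: condition on the win event, decompose the resulting KL divergence by the chain rule, average to find a good coordinate $j$ and auxiliary set $T$, and then embed one round of the game into coordinate $j$ via correlated sampling. The strategy is the right one.

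That said, as written there is a genuine gap, and you flag it yourself. The correlated-sampling step is not a deferrable technicality --- it is the crux of the whole proof and accounts for most of the difficulty in both~\cite{Raz98} and~\cite{Hol09}. You invoke ``a correlated-sampling lemma'' as though it were off the shelf, but the statement you need --- that two parties, each holding a distribution statistically close to an unknown common reference and sharing only common randomness, can produce samples agreeing except on an event of probability comparable to that closeness, \emph{and} that this suffices for each prover to reconstruct its own coordinate of the conditional configuration without signalling --- must be proved, and its interaction with the near-Markov structure extracted in your step~(ii) is exactly where a naive argument fails. There is also a subtlety in step~(ii): the closeness you need is of the conditional law of the \emph{pair} $(X_{1j},X_{2j})$ given each prover's view, and the event you condition on cannot be just $W_T$; you must also fix the transcript on $T$ (at least the responses $U_{1T},U_{2T}$), which is precisely where the $\log|\cU_1||\cU_2|$ factor enters the chain-rule bookkeeping. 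Your description of this as ``cross-information carried by the responses'' is the right intuition, but without an explicit averaging/selection argument (choosing $T$ and $j$ in a way that avoids circularity) and without a proved correlated-sampling lemma, what you have is a faithful outline rather than a proof. For the purposes of this paper none of this is needed, since Theorem~\ref{theorem:standard} is cited, not reproved; the paper's own contribution (and its change-of-measure technique) is the proof of Theorem~\ref{t:multiprover_PRT}, which deliberately sidesteps correlated sampling by working with nonsignalling and sub-nonsignalling strategies, where the tensorization lemma can replace the embedding argument entirely.
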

The statement above holds for any game $G$ with the same universal
function $C(\cdot)$ and universal exponent that depends only on the
cardinality of the response set $\cU_1\times \cU_2$.

An important aspect of the setting above, which will be a prime focus
here, is the role of randomness in response strategies. A simple
derandomization argument shows that the value of games will not change
if the pair $(f_1, f_2)$ is generated randomly using shared randomness
$V$ that is independent of the query. Such strategies with shared
randomness available to the provers can be described by channels
\begin{align} 
&\bPP{U_1U_2|X_1X_2}(u_1,u_2|x_1,x_2) \nonumber\\ &\hspace{1em}=
  \sum_{f_1 \in \cF(\cU_1|\cX_1) \atop f_2 \in \cF(\cU_2|\cX_2)}
  \mu(f_1,f_2) \delta_{f_1,f_2}(u_1,u_2|x_1,x_2)
\label{eq:HVT}
\end{align}
where $\mu$ is a distribution on
$\cF(\cU_1|\cX_1)\times\cF(U_2|\cX_2)$ and $\delta_{f_1,f_2}$ given by
$\delta_{f_1,f_2}(u_1,u_2|x_1,x_2):= \indicator_{ \{u_1 = f_1(x_1), u_2=f_2(x_2)\}}$
  is the deterministic strategy induced by functions $f_1, f_2$.

In physics, strategies of the form \eqref{eq:HVT} are said to satisfy
the {\em hidden variable theory}, a classical physics principle which
says that if all the hidden variables are revealed then the state of
the world will be deterministic.  We denote the set of all such
strategies by $\cP_{\mathtt{HVT}} = \cP_{\mathtt{HVT}}(\cU_1 \times
\cU_2|\cX_1 \times \cX_2)$. With this new notation at our disposal and using
the observation above that shared randomness does not improve the
value of a game, we can express $\rho(G)$ alternatively as
\begin{align*}
\rho(G) = \max_{\bPP{U_1U_2|X_1X_2} \in \cP_{\mathtt{HVT}}}
\mathbb{E}[\win(X_1,X_2,U_1,U_2)].
\end{align*}
Note that since strategies using shared randomness can perform at best
as deterministic strategies, the same must be true for strategies
using independent private randomness at the provers. Thus, yet another
alternative form of $\rho(G)$ is given by
\begin{align}
\rho(G) &= \max\big\{ \mathbb{E}[\win(X_1,X_2,U_1,U_2)]: \nonumber \\ &\hspace{10mm}
\bPP{U_1U_2|X_1X_2} \mbox{ s.t. } U_1 \mc X_1 \mc X_2 \mc U_2 \big\}, \label{eq:value-long-markov}
\end{align} 
namely we can consider maximization over Markov chains $U_1 \mc X_1
\mc X_2 \mc U_2$ with marginal of $(X_1,X_2)$ fixed to $\bPP{X_1X_2}$.

It is important to examine the limitation posed by restricting to
strategies in $\cP_{\mathtt{HVT}}$. In fact, a contentious debate in
physics revolving around statistical modeling 
of quantum measurements was finally settled in the second half of the
previous century through quantitative distinction between correlations
allowed in hidden variable theory and more general {\em nonsignalling}
correlation. 

For our setting, we can define the class of {\em
  nonsignaling strategies} as follows.
\begin{definition}[Nonsignalling strategies]
Let $\cP_{\mathtt{NS}} = \cP_{\mathtt{NS}}(\cU_1\times
\cU_2|\cX_1\times \cX_2)$ be the set of all strategies
$\bPP{U_1U_2|X_1X_2}$ satisfying
\begin{align} 
\begin{aligned}
\bP{U_1|X_1X_2}{u_1|x_1,x_2} 
&=  \bP{U_1|X_1 X_2}{u_1|x_1,x_2^\prime},
\\
\bP{U_2|X_1X_2}{u_2|x_1,x_2}
&=\bP{U_2|X_1X_2}{u_2|x_1^\prime,x_2}
\end{aligned}
\label{eq:NS}
\end{align}
for every $x_1 \neq x_1^\prime$ and $x_2 \neq
x_2^\prime$. Equivalently, we can express these conditions as 
$I(U_1 \wedge X_2|X_1) = I(U_2 \wedge
X_1|X_2)=0$,
namely the Markov relations $U_1\mc X_1 \mc X_2$ and $U_2\mc X_2 \mc
X_1$ hold. 
\end{definition}

Note that these strategies include as a special case the
``long Markov strategies'' satisfying $U_1\mc X_1 \mc X_2 \mc
U_2$. This latter class performs as well as $\cP_{\mathtt{HVT}}$.
In fact, it is easy to verify that strategies in $\cP_{\mathtt{HVT}}$
satisfy~\eqref{eq:NS}, which yields
\begin{align}
\cP_{\mathtt{HVT}} \subset \cP_{\mathtt{NS}}.
\label{e:HVT_in_NS}
\end{align}
%As noted in~\cite{Hol09}, 
In typical applications of parallel repetition
theorem in complexity theory, it suffices to use a version of the
theorem  for nonsignalling strategies. In any case, the next question
is of independent interest: Does parallel repetition theorem hold if we allow the
broader class of nonsignalling strategies?

Specifically, denote by $\rho_{\mathtt{NS}}(G)$ the maximum
probability of satisfying $\win$ using nonsignalling strategies, i.e.,
\begin{align*}
\rho_{\mathtt{NS}}(G) := \max_{\bPP{U_1U_2|X_1X_2} \in
  \cP_{\mathtt{NS}}} \mathbb{E}[\win(X_1,X_2,U_1,U_2)].
\end{align*}
By~\eqref{e:HVT_in_NS}, $\rho(G) \le \rho_{\mathtt{NS}}(G)$. In fact,
the inequality can be strict for some games as illustrated by the next
example.

\begin{example}
Consider the following CHSH type Bell test experiment \cite{CHSH69}.
For $\cX_1=\cX_2=\cU_1=\cU_2=\{0,1\}$, let $\bPP{X_1X_2}$ be the
uniform distribution on $\{0,1\}^2$, and let predicate $\win$ be given by
$\win(x_1,x_2,u_1,u_2) = \mathbf{1}[u_1 \oplus u_2 = x_1 \wedge x_2]$. It can be
seen that the winning probability of any deterministic strategy
$\delta_{f_1,f_2}$ is upper bounded by $3/4$, whereby $\rho(G) \le
3/4$. This bound is attained by the deterministic strategy
$f_1(x_1)=f_2(x_2)=0$ for all $x_1, x_2\in\{0,1\}$.

Next, consider the strategy given by
\begin{align*}
\mathrm{P}_{U_1U_2|X_1X_2}^{\mathtt{PR}}(u_1,u_2|x_1,x_2) =
\frac{1}{2} \mathbf{1}[ u_1 \oplus u_2 = x_1 \wedge x_2].
\end{align*}
This particular correlation is termed the Pepescu-Rohrlich box, PR box
for short, since it appeared in~\cite{PopRoh94}. It can be verified
that this strategy satisfies the nonsignalling
condition~\eqref{eq:NS}. But provers can win the game with probability
$1$ by using this strategy. Therefore, $\rho_{\mathtt{NS}}(G)=1$,
strictly more than $\rho(G)$.
\end{example}

Holenstein proved that the following version of parallel repetition
theorem for nonsignaling strategies.
\begin{theorem}[\cite{Hol09}] \label{theorem:non-signaling}
There exists a function $C:[0,1]\to [0,1]$ satisfying $C(t)<1$ if
$t<1$ such that for any game $G$,
\begin{align*}
\rho_{\mathtt{NS}}(G^{\wedge n}) \le C(\rho_{\mathtt{NS}}(G))^{- n}.
\end{align*}
\end{theorem}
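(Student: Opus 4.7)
My plan is to follow the change-of-measure, soft-constraint recipe outlined in the introduction: relax the hard nonsignalling conditions to KL bounds, tilt by the winning event, chain-rule the resulting divergence to a single coordinate, and close with continuity of the soft value.

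\emph{Soft value and continuity.} For $\eta \ge 0$, define the relaxed value $\rho^{\eta}_{\mathtt{NS}}(G)$ to be the supremum of $\mathbb{E}_{\dQ}[\win(X_1,X_2,U_1,U_2)]$ over all joint distributions $\dQ_{H X_1 X_2 U_1 U_2}$ (with an auxiliary variable $H$) satisfying the soft constraints $\CKL{\dQ_{X_1X_2|H}}{\bPP{X_1X_2}}{\dQ_H}\le \eta$, $I_{\dQ}(U_1\wedge X_2\mid X_1 H)\le \eta$, and $I_{\dQ}(U_2\wedge X_1\mid X_2 H)\le \eta$. At $\eta=0$ this collapses to $\rho_{\mathtt{NS}}(G)$, and by compactness plus Pinsker's inequality the map $\eta\mapsto \rho^{\eta}_{\mathtt{NS}}(G)$ is continuous from above at $0$. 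Hence if $\rho_{\mathtt{NS}}(G)<1$ there is an explicit threshold $\eta_0=\eta_0(\rho_{\mathtt{NS}}(G))>0$ such that $\rho^{\eta_0}_{\mathtt{NS}}(G)<1$.

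\emph{Tilt and single-letterize.} Set $\alpha := \rho_{\mathtt{NS}}(G^{\wedge n})$, fix an optimal $\dP^{*}\in \cP_{\mathtt{NS}}(\cU_1^n\times\cU_2^n\mid\cX_1^n\times\cX_2^n)$, and define $\tilde\dP(\cdot):=\dP^{*}(\cdot\mid W=1)$ with $W=\win^{\wedge n}$. Then $\KL{\tilde\dP}{\dP^{*}}=\log(1/\alpha)$ and $\tilde\dP(W_i=1)=1$ for every $i$. For each coordinate $i$, introduce $L_i:=(X_1^{i-1},X_2^{i-1})$ and the \emph{asymmetric} histories $H_i^{(1)}:=(L_i,U_1^{i-1})$ and $H_i^{(2)}:=(L_i,U_2^{i-1})$. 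Nonsignalling of $\dP^{*}$ together with the product input yields the single-shot identities $\dP^{*}_{X_{1i}X_{2i}|L_i}=\bPP{X_1X_2}$, $\dP^{*}_{U_{1i}|X_{1i}X_{2i}H_i^{(1)}}=\dP^{*}_{U_{1i}|X_{1i}H_i^{(1)}}$, and the symmetric one for $U_{2i}$. A chain-rule decomposition of $\KL{\tilde\dP}{\dP^{*}}$, combined with the variational characterization of conditional mutual information, then yields
\[
\log(1/\alpha)\ge \sum_{i=1}^{n}\bigl[ I_{\tilde\dP}(U_{1i}\wedge X_{2i}\mid X_{1i}H_i^{(1)}) + I_{\tilde\dP}(U_{2i}\wedge X_{1i}\mid X_{2i}H_i^{(2)}) + \CKL{\tilde\dP_{X_{1i}X_{2i}|L_i}}{\bPP{X_1X_2}}{\tilde\dP_{L_i}}\bigr].
\]

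\emph{Pick a coordinate and close.} An averaging argument produces a coordinate $i^{*}$ and a realization of the relevant history at which each of the three bracketed terms is at most $\eta:=3\log(1/\alpha)/n$. Conditioning on that realization exhibits a candidate in the definition of $\rho^{\eta}_{\mathtt{NS}}(G)$ (with the chosen history playing the role of $H$) of value $1$, since $\tilde\dP(W_{i^{*}}=1)=1$. Hence $\rho^{\eta}_{\mathtt{NS}}(G)=1$; combined with the first paragraph this forces $\eta>\eta_0$, i.e.\ $\alpha\le 2^{-n\eta_0/3}$. Taking $C(t):=2^{-\eta_0(t)/3}\in(0,1)$ for $t<1$ delivers the claimed exponential decay.

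\emph{The main obstacle.} The delicate step is the chain-rule bookkeeping in the previous paragraph. Conditioning symmetrically on the full past $(L_i,U_1^{i-1},U_2^{i-1})$ would break the single-shot nonsignalling property of $\dP^{*}$ at coordinate $i$, so a naive single chain rule at the joint level exposes quantities incompatible with the soft value of the first paragraph. The asymmetric histories $H_i^{(1)}\neq H_i^{(2)}$ --- one for auditing each prover's nonsignalling deviation --- absorbed by the auxiliary $H$ in the soft-value definition, are precisely what keep each per-coordinate subproblem a single-shot nonsignalling game while still permitting all three divergences to be packaged into the single $\log(1/\alpha)$ budget. Making this bookkeeping, together with the continuity modulus $\eta_0(\cdot)$, fully quantitative is the technical heart of the argument.
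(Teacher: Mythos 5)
Your proposal follows the right high-level recipe (relax hard nonsignalling to KL constraints, tilt by the winning event, single-letterize), but both the specific soft value you define and the way you close the argument diverge from the paper's proof and contain genuine gaps. The paper does not prove Theorem~\ref{theorem:non-signaling} directly: it proves Theorem~\ref{t:multiprover_PRT} for sub-nonsignalling strategies and recovers the two-prover nonsignalling statement via the identity $\rho_{\mathtt{SNS}}(G)=\rho_{\mathtt{NS}}(G)$ for $m=2$. Crucially, the paper's soft value $\eta_{\mathtt{NS}}(G,\delta)$ has \emph{no auxiliary variable}: the constraints $I(\tilde U_\cA\wedge\tilde X_{\cA^c}\mid\tilde X_\cA)+D(\bPP{\tilde X_\cM}\|\bPP{X_\cM})\le\delta$ are imposed on the joint law of $(\tilde U_\cM,\tilde X_\cM)$ itself. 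The tensorization (Lemma~\ref{lemma:tensorization}) then uses a uniform random index $J$ and the superadditivity inequality of \cite[Prop.~1]{TyaWat18} to produce a \emph{single} $\delta$-approximate nonsignalling law $\bPP{\tilde U_{\cM,J}\tilde X_{\cM,J}}$, with $J$ averaged out, not conditioned on; and the comparison to $\rho_{\mathtt{SNS}}(G)$ is quantitative (Lemma~\ref{lemma:approximate-sub-2} via Pinsker and Lemma~\ref{lemma:approximate-sub}), with an explicit constant $C'_m$. No ``pick a coordinate and fix a history realization'' step appears.

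The gaps in your attempt are concrete. First, the three soft constraints defining $\rho^\eta_{\mathtt{NS}}(G)$ share a single $H$, but your per-coordinate bounds involve three different conditionings: $L_i$ for the marginal term, $H_i^{(1)}=(L_i,U_1^{i-1})$ for the first MI, $H_i^{(2)}=(L_i,U_2^{i-1})$ for the second. These cannot be merged into one $H$: enlarging $H$ to $(L_i,U_1^{i-1},U_2^{i-1})$ can \emph{create} signalling even under $\dP^*$ itself. For instance, if $\dP^*$ includes a PR-box coupling across coordinates, say $U_{12}=U_{21}\oplus(X_{12}\wedge X_{22})$ with $U_{21}$ otherwise uniform, then $I_{\dP^*}(U_{12}\wedge X_{22}\mid X_{12},U_{21})$ is a full bit although $\dP^*$ is nonsignalling. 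So your ``candidate of value $1$'' in $\rho^\eta_{\mathtt{NS}}(G)$ is not exhibited, and the acknowledgment that the ``bookkeeping is the technical heart'' is precisely the unresolved step. Second, ``continuity from above at $\eta=0$ by compactness plus Pinsker'' does not yield an $\eta_0$ depending \emph{only} on $\rho_{\mathtt{NS}}(G)$: $H$ has unbounded alphabet so the feasible set is not compact without further work, and even if one obtains continuity for each fixed $G$, the resulting modulus a priori depends on the game, whereas the theorem requires a universal $C(\cdot)$. The paper sidesteps both issues by eliminating the auxiliary $H$ entirely and replacing qualitative continuity with the quantitative bound $\eta_{\mathtt{NS}}(G,\delta)\le\rho_{\mathtt{SNS}}(G)+C'_m\sqrt{(2\ln 2)\delta}$. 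Third, the displayed inequality $\log(1/\alpha)\ge\sum_i[\cdots]$ is asserted but not derived; it is not a single chain-rule identity because the three summands condition on different subsets of the past, and its validity (even as an inequality) would need a careful argument of the kind the paper supplies via \cite[Prop.~1]{TyaWat18}.
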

Note that now the exponent of parallel repetition theorem doesn't even
depend on the cardinality of response set. Also, we remark that the
proof of Theorem \ref{theorem:non-signaling} in~\cite{Hol09} is much
simpler than the simplified proof of Theorem \ref{theorem:standard} in
the same paper.

%%%%%%%%%%%%%%%%%%%%%%%%%%%%%%%%%%%%%%%%%%%%%%%
\section{Multiprover Parallel Repetition Theorem}\label{s:multiprover}

Moving to the multiprover setting, a multiprover
game $G=(\bPP{X_\cM}, \win)$ consists of a verifier  
and $m$ provers $\cP_1,\ldots,\cP_m$. 
Denoting $\cM=\{1, ..., m\}$ and $X_\cM =
(X_1,\ldots,X_m)$, the verifier samples a query
$X_\cM$ according to a fixed joint distribution $\bPP{X_{\cM}}$
and sends $X_i$ to $\cP_i$ for $i$ in  $\cM$.
Upon receiving the queries, each prover $\cP_i$ sends a response $U_i
\in \cU_i$, $1\leq i \leq m$, to the verifier. 
The provers win the game if $\win(X_\cM,U_\cM) = 1$ for a given
predicate $\win:\cX_\cM \times \cU_\cM \to \{0,1\}$. 

As in the previous section, the provers' strategy can be described by 
a channel $\bPP{U_\cM|X_\cM}$.  
The set of all strategies that can be described as convex combination of deterministic, local strategies is denoted by 
$\cP_{\mathtt{HVT}} = \cP_{\mathtt{HVT}}(\cU_{\cM}|\cX_{\cM})$. 
Namely, $\cP_{\mathtt{HVT}}$ is the set of all strategies of the form
\begin{align*} 
&\bPP{U_\cM|X_\cM}(u_\cM|x_\cM) \nonumber\\ &\hspace{1em}=
  \sum_{f_i \in \cF(\cU_i|\cX_i), i\in \cM}
  \mu(f_\cM) \delta_{f_\cM}(u_\cM|x_\cM),
\end{align*}
where $\mu$ is measure on $\prod_{i=1}^m\cF(\cU_i|\cX_i)$ and $\delta_{f_\cM}(u_\cM|x_\cM)=\indicator_{\{u_i=f_i(x_i), i\in \cM\}}$. 
The value of the game that can be attained by strategies satisfying hidden variable theory is given by
\begin{align*}
\rho(G) = \max_{\bPP{U_{\cM}|X_{\cM}} \in \cP_{\mathtt{HVT}}} \mathbb{E}[ \win(X_{\cM}, U_{\cM})].
\end{align*}
The parallel repetition game $G^{\wedge n}$ is defined analogously to the two-player setting.

For the multi-prover setting, a nonsignaling strategy is a channel $\bPP{U_\cM|X_\cM}$ such that the following condition is satisfied:
\begin{align*}
\bP{U_\cA|X_\cM}{u_\cA|x_\cA, x_{\cA^c}} 
& = \bP{U_\cA|X_\cM}{u_\cA|x_\cA, {x}_{\cA^c}^\prime},
\end{align*}
for all $x_\cA, x_{\cA^c}, {x}_{\cA^c}^\prime, u_{\cA}$ and all subsets $\cA$ of $\cM$. Denoting
the set of all nonsignaling strategies by $\cP_{\mathtt{NS}} =
\cP_{\mathtt{NS}}(\cU_\cM|\cX_\cM)$, the value of the game that can be
attained by nonsignaling strategies is given by
\begin{align*}
\rho_{\mathtt{NS}}(G) = \max_{\bPP{U_{\cM}|X_{\cM}} \in
  \cP_{\mathtt{NS}}} \mathbb{E}[ \win(X_{\cM}, U_{\cM})].
\end{align*}

A general parallel repetition theorem for strategies in $\cP_{\mathtt{HVT}}$ is 
not known. As we have mentioned at the end of the previous section, proving parallel repetition theorem for strategies in $\cP_{\mathtt{NS}}$
is relatively easier than that for strategies in $\cP_{\mathtt{HVT}}$;
the former is known to hold under the condition that query distribution $\bPP{X_\cM}$ has full support \cite{ArnRenVid16, BuhFehSch14}. 
Remarkably, without the full support condition, 
a counterexample appeared in~\cite{HolYan18} for a parallel repetion theorem for $\cP_{\mathtt{NS}}$. 
Specifically, it was shown that the following three-prover game
satisfies $\rho_{\mathtt{NS}}(G^{\wedge n})=2/3$ for all $n$ (this
example appeared first in~\cite{ArnRenVid16}):
\begin{example}[Anticorrelation game]\label{e:counterexample}
For $m=3$, let $\cX_i =\cU_i=\{0,1\}$, $1\leq i \leq 3$. Let query
distribution $\bPP{X_\cM}$ be uniform on all strings $(x_1, x_2, x_3)$
with Hamming weight $2$. The required game $G$ is given by predicate
\[
\win(x_\cM, u_\cM)= \begin{cases} 1, \quad &u_i=u_j \text{ if } x_i=x_j
  \\ 0, \quad &\text{otherwise},
\end{cases}
\]
i.e., the responses are identical at the two locations where queries
are $1$.
\end{example}
This example rules out a parallel repetition theorem for
$\cP_{\mathtt{NS}}$ in general.  In other words,
$\rho_{\mathtt{NS}}(G) <1$ is not sufficient to claim the exponential
decay of winning probability in parallel repetition games.  In fact,
even preceeding this counterexample, a parallel repetition theorem, i.e.,
exponential decay, was shown to hold if the value of the single game
for a broader class of strategies, called {\em sub-nonsignalling}
strategies, is strictly less than $1$~\cite{LancienWinter16}.

Sub-nonsignaling strategies $\bPP{U_\cM|X_\cM}$, which we define next, need not be
  conditional distributions and are only required to be subnormalized,
  namely we only need it to be nonnegative and satisfying $\sum_{u_\cM}\bPP{U_\cM|X_\cM}(u_\cM|x_\cM)\leq
  1$. Both total variation distances and KL
  diveregence can be applied to such subnormalized distribution. We
  remark that the marginal $\bPP{Y}$ and the conditional distribution
  $\bPP{Y|X}$, respectively, for a subnormalized distribution
  $\bPP{XY}$ are defined as $\bP{Y}{y}=\sum_{x}\bP{XY}{x, y}$ and
  $\bP{Y|X}{y|x}=\bP{XY}{x,y}/\bP{Y}{y}$. While $\bPP Y$, too, is a
  subnormalized distribution, $\bPP{Y|X}$ will be a (normalized) distribution.

\begin{definition}[Sub-nonsignalling strategies]
The set $\cP_{\mathtt{SNS}}$ of sub-nonsignalling strategies consists
of subnormalized $\bPP{U_\cM|X_\cM}$ such that, for each subsets $\cA$ of $\cM$,
there exists a channel $\bQQ{U_\cA|X_\cA}$ satisfying:
\begin{align} \label{eq:sub-nonsignaling-condition}
{ \bP{U_\cA|X_\cM}{u_\cA|x_\cA, x_{\cA^c}} } \leq
\bQ{U_\cA|X_\cA}{u_\cA|x_\cA},
\end{align}
for all $x_\cA, x_{\cA^c}, u_{\cA}$.
\end{definition}
Note that nonsignalling strategies are those for which the inequality
condition above is replaced with identity. Heuristically,
sub-nonsignalling strategies maybe regarded as the class of strategies
close to nonsignalling stratetigies in statistical distance.  Another heuristic was suggested in~\cite{LancienWinter16} which interpretted sub-nonsignalling strategies
as nonsignalling strategies with addition $x_\cM$ dependent power to
randomly abstain from responding. In fact, we can find a
sub-nonsignalling strategy close to a distribution for which all
conditional distributions $\bPP{U_\cA|X_\cM}$ are close to some conditional distributions 
$\bQQ{U_\cA|X_\cA}$.\footnote{Lemma~\ref{lemma:approximate-sub} is a multiprover
  extension of \cite[Lemma 9.5]{Hol09} which showed that in the
  two-prover setting we can find a nonsignalling
  $\mathrm{P}^\prime_{U_\cM|X_\cM}$.}  
\begin{lemma}[{\cite[Lemma 5.2]{LancienWinter16}}]\label{lemma:approximate-sub}
Let $\bPP{X_\cM}$ be a query distribution on $\cX_\cM$, and let
$\bPP{\tilde{U}_\cM \tilde{X}_\cM}$ be a probability distribution on
$\cU_\cM \times \cX_\cM$. Suppose that for each $\cA \subsetneq \cM$ there exist a conditional distribution $\bQQ{U_\cA|X_\cA}$ such that 
\begin{align*}
d_{\mathtt{var}}(\bPP{\tilde{U}_\cA \tilde{X}_\cM}, \bPP{X_\cM} \bQQ{U_\cA|X_\cA}) \le \ep_\cA.
\end{align*}
Then, there exists a sub-nonsignaling
$\mathrm{P}^\prime_{U_\cM|X_\cM}$ such that  
\begin{align*}
d_{\mathtt{var}}(\bPP{\tilde{U}_\cM\tilde{X}_\cM}, \bPP{X_\cM} \mathrm{P}^\prime_{U_\cM|X_\cM}) \le \ep_\emptyset + 2 \sum_{\emptyset \neq \cA \subsetneq \cM} \ep_\cA.
\end{align*}
\end{lemma}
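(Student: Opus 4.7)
My plan is to construct $\mathrm{P}^\prime$ by an independent multiplicative ``thinning'' of $\bPP{\tilde U_\cM\tilde X_\cM}$ that caps the $\cA$-marginal for every proper subset simultaneously. For each $\cA\subsetneq\cM$ and each $(u_\cA,x_\cM)$, let
\[
r_\cA(u_\cA,x_\cM)=\bigg[1-\frac{\bP{X_\cM}{x_\cM}\bQ{U_\cA|X_\cA}{u_\cA|x_\cA}}{\bP{\tilde U_\cA\tilde X_\cM}{u_\cA,x_\cM}}\bigg]_+
\]
(with the convention $r_\cA=0$ when the denominator vanishes, and for $\cA=\emptyset$ the formula reads $r_\emptyset(x_\cM)=[1-\bP{X_\cM}{x_\cM}/\bP{\tilde X_\cM}{x_\cM}]_+$). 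Define
\[
\bar{\dP}(u_\cM,x_\cM)=\bP{\tilde U_\cM\tilde X_\cM}{u_\cM,x_\cM}\prod_{\cA\subsetneq\cM}\bigl(1-r_\cA(u_\cA,x_\cM)\bigr),
\]
and set $\mathrm{P}^\prime(u_\cM|x_\cM)=\bar{\dP}(u_\cM,x_\cM)/\bP{X_\cM}{x_\cM}$ on $\mathrm{supp}(\bPP{X_\cM})$ and $0$ elsewhere; then $\bPP{X_\cM}\mathrm{P}^\prime=\bar{\dP}$.

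First I would verify that $\mathrm{P}^\prime$ is sub-nonsignalling with witnesses $\bQQ{U_\cA|X_\cA}$. For any nonempty $\cA\subsetneq\cM$, the $\cA$-th factor $(1-r_\cA(u_\cA,x_\cM))$ depends on $u$ only through $u_\cA$, so it factors out of $\sum_{u_{\cA^c}}\bar{\dP}$; upper-bounding the remaining $(1-r_\cB)\le1$ yields $\sum_{u_{\cA^c}}\bar{\dP}\le(1-r_\cA)\bP{\tilde U_\cA\tilde X_\cM}{u_\cA,x_\cM}=\min\bigl(\bP{\tilde U_\cA\tilde X_\cM}{u_\cA,x_\cM},\bP{X_\cM}{x_\cM}\bQ{U_\cA|X_\cA}{u_\cA|x_\cA}\bigr)\le\bP{X_\cM}{x_\cM}\bQ{U_\cA|X_\cA}{u_\cA|x_\cA}$, which after dividing by $\bP{X_\cM}{x_\cM}$ is exactly~\eqref{eq:sub-nonsignaling-condition}. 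The $\cA=\emptyset$ factor plays two roles: it ensures $\sum_{u_\cM}\mathrm{P}^\prime(u_\cM|x_\cM)\le1$ (subnormalization), and since $r_\emptyset=1$ whenever $\bP{X_\cM}{x_\cM}=0$, it forces $\bar{\dP}$ to vanish off $\mathrm{supp}(\bPP{X_\cM})$, so no division-by-zero occurs.

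Next I would bound the distance. Because $\bar{\dP}\le\bPP{\tilde U_\cM\tilde X_\cM}$ pointwise and $\bPP{X_\cM}\mathrm{P}^\prime=\bar{\dP}$, one has $d_{\mathtt{var}}(\bPP{\tilde U_\cM\tilde X_\cM},\bPP{X_\cM}\mathrm{P}^\prime)=\tfrac12\bigl(1-\sum_{u_\cM,x_\cM}\bar{\dP}\bigr)$. The elementary bound $\prod_\cA(1-r_\cA)\ge1-\sum_\cA r_\cA$ (Weierstrass-type, valid for $r_\cA\in[0,1]$) followed by marginalization over $u_{\cA^c}$ gives
\[
\sum_{u_\cM,x_\cM}\bar{\dP}\ge 1-\sum_{\cA\subsetneq\cM}\sum_{u_\cA,x_\cM}\bP{\tilde U_\cA\tilde X_\cM}{u_\cA,x_\cM}\,r_\cA(u_\cA,x_\cM).
\]
The definition of $r_\cA$ makes each inner sum equal $\sum_{u_\cA,x_\cM}\bigl[\bP{\tilde U_\cA\tilde X_\cM}{u_\cA,x_\cM}-\bP{X_\cM}{x_\cM}\bQ{U_\cA|X_\cA}{u_\cA|x_\cA}\bigr]_+$; since $\bPP{\tilde U_\cA\tilde X_\cM}$ and $\bPP{X_\cM}\bQQ{U_\cA|X_\cA}$ are both probability measures on the same finite set, this $[\cdot]_+$-sum equals their total variation distance and is therefore $\le\ep_\cA$ (respectively $\le\ep_\emptyset$ when $\cA=\emptyset$). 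Combining yields
\[
d_{\mathtt{var}}\bigl(\bPP{\tilde U_\cM\tilde X_\cM},\bPP{X_\cM}\mathrm{P}^\prime\bigr)\le \tfrac12\sum_{\cA\subsetneq\cM}\ep_\cA\le \ep_\emptyset+2\sum_{\emptyset\ne\cA\subsetneq\cM}\ep_\cA,
\]
establishing the lemma (in fact with considerable slack).

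The main obstacle I anticipate is simply the bookkeeping at degenerate configurations ($\bP{\tilde U_\cA\tilde X_\cM}=0$ or $\bP{X_\cM}{x_\cM}=0$) together with confirming that the $\cA$-marginal cap really passes through the product; including $\cA=\emptyset$ in the product dispatches both issues at once, after which the argument reduces to one application of the Weierstrass inequality and the identity $\sum[p-q]_+=d_{\mathtt{var}}(p,q)$ for probability measures.
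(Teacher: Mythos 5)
The paper does not prove Lemma~\ref{lemma:approximate-sub}; it invokes it as a black box from \cite[Lemma~5.2]{LancienWinter16}, so there is no internal proof to compare against. That said, your proposal is a correct, self-contained proof, and the multiplicative thinning construction is clean and works. I checked the three steps: (i) the identity $(1-r_\cA)\bP{\tilde U_\cA\tilde X_\cM}{u_\cA,x_\cM}=\min\bigl(\bP{\tilde U_\cA\tilde X_\cM}{u_\cA,x_\cM},\,\bP{X_\cM}{x_\cM}\bQ{U_\cA|X_\cA}{u_\cA|x_\cA}\bigr)$ is right, and after factoring out $(1-r_\cA)$ and bounding the other factors by $1$, marginalizing over $u_{\cA^c}$ gives exactly \eqref{eq:sub-nonsignaling-condition}; including $\cA=\emptyset$ in the product simultaneously gives subnormalization (which is precisely the $\cA=\emptyset$ instance of the definition) and kills $\bar{\dP}$ off the support of $\bPP{X_\cM}$, while $\cA=\cM$ is trivially witnessed by any channel dominating a subnormalized kernel; (ii) $\bar{\dP}\le\bPP{\tilde U_\cM\tilde X_\cM}$ pointwise since each factor lies in $[0,1]$; (iii) the Weierstrass bound $\prod_\cA(1-r_\cA)\ge 1-\sum_\cA r_\cA$ and the identity $\bP{\tilde U_\cA\tilde X_\cM}{u_\cA,x_\cM}\,r_\cA(u_\cA,x_\cM)=\bigl[\bP{\tilde U_\cA\tilde X_\cM}{u_\cA,x_\cM}-\bP{X_\cM}{x_\cM}\bQ{U_\cA|X_\cA}{u_\cA|x_\cA}\bigr]_+$ yield the distance estimate.

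Two small remarks. First, a normalization point: you use $d_{\mathtt{var}}=\tfrac12\|\cdot\|_1$ (so that $\sum[p-q]_+=d_{\mathtt{var}}$ and $d_{\mathtt{var}}(\bPP{\tilde U_\cM\tilde X_\cM},\bar{\dP})=\tfrac12(1-\sum\bar{\dP})$), whereas this paper, following \cite{CsiKor11} and consistent with the Pinsker bound $\sqrt{(2\ln 2)\delta}$ appearing in the proof of Lemma~\ref{lemma:approximate-sub-2}, uses the unhalved $\|\cdot\|_1$. Under that convention your two identities each acquire a factor of $2$, but they cancel and the final bound is again $\tfrac12\sum_{\cA\subsetneq\cM}\ep_\cA$; so the conclusion is unaffected, just be aware of the mismatch. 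Second, your bound $\tfrac12\sum_{\cA\subsetneq\cM}\ep_\cA$ is strictly stronger than the stated $\ep_\emptyset+2\sum_{\emptyset\ne\cA\subsetneq\cM}\ep_\cA$; this is plausible rather than alarming, because the original proof in \cite{LancienWinter16} (extending \cite[Lemma~9.5]{Hol09}) proceeds by sequential, per-subset corrections that incur roughly a factor of $2$ per nonempty $\cA$, whereas your single simultaneous thinning avoids that compounding. Since Lemma~\ref{lemma:approximate-sub-2} only feeds this bound into a worst-case constant $C_m'=\cO(2^m)$, the improvement does not change the headline exponent in Theorem~\ref{t:multiprover_PRT}, but it is a genuine, modest sharpening of the cited lemma.
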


By definition, the value of the game that is attained by
sub-nonsignaling strategies satisfy 
$\rho_{\mathtt{SNS}}(G) \ge \rho_{\mathtt{NS}}(G)$. For two-prover games,
$\rho_{\mathtt{SNS}}(G)$ was shown in~\cite{LancienWinter16} to
coincide with $\rho_{\mathtt{NS}}(G)$. 
However, equality may not hold for multiprover games, in
general. Indeed, the game in Example~\ref{e:counterexample} has
$\rho_{\mathtt{NS}}(G) = 2/3$ and $\rho_{\mathtt{SNS}}(G)= 1$.
 Interestingly, when the query distribution $\bPP{X_\cM}$ has full
 support, there exists a constant $\Gamma=\Gamma(\bPP{X_\cM})$ such
 that, for $\varepsilon > 0$, ($cf$.~\cite{LancienWinter16})
\begin{align} \label{eq:SNS-full-support}
\rho_{\mathtt{NS}}(G) < 1 - \varepsilon \Longrightarrow
\rho_{\mathtt{SNS}}(G) < 1 - \frac{\varepsilon}{\Gamma}. 
\end{align} 

Before we state the parallel repetition theorem for sub-nonsignalling
strategies, we switch to a slightly more general formulation where in
the $n$ parallel repition game, instead of  
 winning all the games, we are interested in quantifying the
 probability that the provers win more than a fraction $\Delta$ of the
 game.  
This formulation is closer to the rate-distortion theory formulation 
of information theory and appeared, for instance, in 
\cite{Rao11}. Specifically, for $0 < \Delta \le 1$, consider  
\begin{align}
\rho_{\mathtt{SNS}}(G^n,\Delta) &:= \max\big\{
\Pr\big( \numwin(X_\cM^n,U_\cM^n) \ge n \Delta \big) : \nonumber \\
&\hspace{20mm} \bPP{U_\cM^n|X_\cM^n} \in \cP_{\mathtt{SNS}}(\cU_\cM^n|\cX_\cM^n)
\big\},
\label{eq:additive-predicate}
\end{align}
where $\numwin(x_\cM^n,u_\cM^n) := \sum_{j=1}^n \win(x_{\cM,j},u_{\cM,j})$.
Since $\win(x_{\cM,j}, u_{\cM,j})$ is the indicator for a win in the
$i$th coordinate, $\numwin(x_\cM^n,u_\cM^n)$ denotes the total number
of wins. Analogously, ${\cP_{\mathtt{NS}}}$ is defined by restricting
the maximum in~\eqref{eq:additive-predicate} to nonsignalling
strategies; 
our original definition $\rho_{\mathtt{NS}}(G^{\wedge n})$ coincides
with $\rho_{\mathtt{NS}}(G^n,1)$.

We now recall the multiprover parallel repetition theorem
from~\cite{LancienWinter16}.  

\begin{theorem}\label{t:multiprover_PRT}
Let $G= (\bPP{X_\cM}, \win)$ be a multiprover game with
$\rho_{\mathtt{SNS}}(G) < 1$. For any $\Delta \ge
\rho_{\mathtt{SNS}}(G)+\nu$ 
with $0 < \nu \le 1 - \rho_{\mathtt{SNS}}(G)$, we have
\begin{align*}
\rho_{\mathtt{SNS}}(G^n,\Delta) \le \exp\bigg( - n \frac{\nu^2}{C_m} \bigg),
\end{align*}
where the constant $C_m = \cO(2^{2m})$ depends only on $m=|\cM|$.
\end{theorem}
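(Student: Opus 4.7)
The plan is to execute the change-of-measure recipe sketched in the introduction: KL-softened value function, tilt to the winning event, and single-letterization via the chain rule. Fix $p := \rho_{\mathtt{SNS}}(G^n, \Delta)$, an optimal sub-nonsignalling $\bPP{U_\cM^n | X_\cM^n}$ achieving it, and its witnesses $\bQQ{U_\cA^n | X_\cA^n}$ with $\bPP{U_\cA^n | X_\cM^n} \le \bQQ{U_\cA^n | X_\cA^n}$ pointwise for each nonempty $\cA \subsetneq \cM$. Denote by $\bPP{X_\cM}^G$ the game's query distribution and write $\bPP{X_\cM^n}^G = \prod_j \bPP{X_\cM}^G$. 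Let $\cE := \{\numwin \ge n\Delta\}$ and $\tilde\dP := \bPP{X_\cM^n U_\cM^n}(\,\cdot\mid \cE)$, so that $\KL{\tilde\dP}{\bPP{X_\cM^n U_\cM^n}} = \log(1/p)$.

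\textbf{Soft single-letter bound.} I first define $\rho^\star(G, \eta_0, \{\eta_\cA\})$ as the supremum of $\bEE{\win(X_\cM, U_\cM)}$ over joint distributions $\bPP{X_\cM U_\cM}$ satisfying $\KL{\bPP{X_\cM}}{\bPP{X_\cM}^G} \le \eta_0$ and, for every $\cA \subsetneq \cM$, existence of a channel $\bQQ{U_\cA | X_\cA}$ with $\KL{\bPP{U_\cA X_\cM}}{\bPP{X_\cM}^G\,\bQQ{U_\cA | X_\cA}} \le \eta_\cA$. Pinsker's inequality turns each KL bound into a TV bound, and Lemma~\ref{lemma:approximate-sub} then furnishes a sub-nonsignalling strategy within TV $\sqrt{\eta_0/2} + 2\sum_\cA \sqrt{\eta_\cA/2}$ of the tested joint, yielding
\[
\rho^\star(G, \eta_0, \{\eta_\cA\}) \le \rho_{\mathtt{SNS}}(G) + \sqrt{\eta_0/2} + 2\!\!\sum_{\emptyset \neq \cA \subsetneq \cM}\!\! \sqrt{\eta_\cA/2}.
\]

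\textbf{Single-letterization under tilt.} The pointwise bound $\bPP{U_\cA^n X_\cM^n} \le \bPP{X_\cM^n}^G\,\bQQ{U_\cA^n | X_\cA^n}$ together with data processing give, for each $\cA$,
\begin{align*}
& \KL{\tilde\dP_{U_\cA^n X_\cM^n}}{\bPP{X_\cM^n}^G\,\bQQ{U_\cA^n | X_\cA^n}} \\
&\quad = \KL{\tilde\dP_{U_\cA^n X_\cM^n}}{\bPP{U_\cA^n X_\cM^n}} + \mathbb{E}_{\tilde\dP}\!\left[\log \tfrac{\bPP{U_\cA^n X_\cM^n}}{\bPP{X_\cM^n}^G\,\bQQ{U_\cA^n | X_\cA^n}}\right] \\
&\quad \le \log(1/p),
\end{align*}
since the second summand is $\le 0$ and the first is $\le \log(1/p)$ by data processing. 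Introduce $J$ uniform on $[n]$ independent of everything and an auxiliary $W$ bundling coordinates of $X_\cM^n$ other than $J$ together with appropriate past $U$-coordinates. Chain-rule decomposition of $\KL{\tilde\dP_{X_\cM^n}}{\bPP{X_\cM^n}^G}$ (using the product structure) and of the $\cA$-wise KL above, then averaging, gives
\begin{align*}
\mathbb{E}_W\!\left[\KL{\tilde\dP_{X_{\cM, J} | W}}{\bPP{X_\cM}^G}\right] &\le \tfrac{\log(1/p)}{n}, \\
\mathbb{E}_W\!\left[\KL{\tilde\dP_{U_{\cA, J} X_{\cM, J} | W}}{\bPP{X_\cM}^G\,\bQQ{U_\cA | X_\cA}^{(W)}}\right] &\le \tfrac{\log(1/p)}{n},
\end{align*}
where $\bQQ{U_\cA | X_\cA}^{(W)}$ is a context-dependent single-letter witness distilled from $\bQQ{U_\cA^n | X_\cA^n}$. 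Under $\tilde\dP$, the single-letter winning probability is $\mathbb{E}_{\tilde\dP}[\win(X_{\cM, J}, U_{\cM, J})] = \mathbb{E}_{\tilde\dP}[\numwin]/n \ge \Delta$.

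\textbf{Concluding.} For each realization of $W$, $\tilde\dP_{X_{\cM, J} U_{\cM, J} | W}$ is feasible for $\rho^\star(G, \eta_0(W), \{\eta_\cA(W)\})$ with parameters equal to the corresponding conditional KL quantities. Applying the soft single-letter bound pointwise in $W$, averaging over $W$, and using Jensen's inequality (concavity of $\sqrt{\cdot}$) yield
\[
\Delta \le \rho_{\mathtt{SNS}}(G) + c_m \sqrt{\log(1/p)/n}, \qquad c_m = O(2^m).
\]
Combining with $\Delta \ge \rho_{\mathtt{SNS}}(G) + \nu$ gives $p \le \exp(-n\nu^2 / c_m^2)$, i.e., $C_m := c_m^2 = O(2^{2m})$, as claimed. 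The main technical delicacy is in the chain-rule step for the $U_\cA^n$ side: because $\bQQ{U_\cA^n | X_\cA^n}(u_\cA^n | x_\cA^n)$ depends on the entire $x_\cA^n$ (not causally on past coordinates), one must chain-rule all of $X_\cA^n$ before $U_\cA^n$, which forces the auxiliary $W$ to include $X_\cM$-coordinates beyond index $J$; coordinating a single $W$ (for example $W = (J, X_\cM^{[n] \setminus \{J\}}, U_\cM^{J-1})$) that simultaneously serves every $\cA \subsetneq \cM$ is the careful bookkeeping required.
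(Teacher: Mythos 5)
Your overall recipe matches the paper's: soften the nonsignalling constraints into KL bounds, tilt to the winning event, bound the $n$-copy soft value by $\rho_{\mathtt{SNS}}(G)$ plus a $\sqrt{\delta}$ correction, and invert. Your change of measure and the soft single-letter bound via Pinsker and Lemma~\ref{lemma:approximate-sub} are essentially the paper's Lemma~\ref{lemma:approximate-sub-2}, and the telescoping bound $\KL{\tPP{U_\cA^n X_\cM^n}}{\bPP{X_\cM^n}\bQQ{U_\cA^n|X_\cA^n}}\le\log(1/p)$ is also correct (it is exactly the paper's long chain of inequalities). The gap is in the single-letterization step. You introduce a time-sharing variable $W$ (e.g.\ $W=(J,X_\cM^{[n]\setminus\{J\}},U_\cM^{J-1})$) and claim $\mathbb{E}_W[\KL{\tPP{X_{\cM,J}|W}}{\bPP{X_\cM}}]\le\log(1/p)/n$ together with the analogous $\cA$-wise bound. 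This does not follow from the chain rule. The chain rule for $\KL{\tPP{X_\cM^n}}{\bPP{X_\cM^n}}$ conditions the $j$th term only on the past $X_\cM^{j-1}$; conditioning additionally on future coordinates $X_\cM^{(j,n]}$ and on $U_\cM^{j-1}$ can only \emph{increase} each term, since the reference $\bPP{X_\cM}$ does not depend on the added conditioning and KL is convex. Thus $\sum_j\CKL{\tPP{X_{\cM,j}|X_\cM^{[n]\setminus j}U_\cM^{j-1}}}{\bPP{X_\cM}}{\cdot}\ge\KL{\tPP{X_\cM^n}}{\bPP{X_\cM^n}}$, with a surplus given by (sums of) mutual-information terms that are strictly positive once you tilt to $\cC$ — the coordinates of $\tPP{}$ are no longer independent. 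There is an additional mismatch that the natural chain-rule contexts for the $X$-part (past $X$) and the $U_\cA$-part (all of $X_\cM^n$) disagree, and as you note no single $W$ reconciles them; but the issue is not merely bookkeeping, it is a directionality problem in the inequality.

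The paper avoids this entirely: Lemma~\ref{lemma:tensorization} single-letterizes with no auxiliary $W$, producing the plain mixture $\bPP{\tilde U_{\cM,J}\tilde X_{\cM,J}}$ as a $\delta$-approximate nonsignalling distribution. The reason it works is the specific form of the soft constraint $I(\tilde U_\cA\wedge\tilde X_{\cA^c}|\tilde X_\cA)+D(\bPP{\tilde X_\cM}\|\bPP{X_\cM})\le\delta$, where the mutual information is relative entropy to the \emph{candidate's own} conditional $\bPP{\tilde U_\cA|\tilde X_\cA}$ rather than a fixed $\bQQ$. Expanding into entropies, the expression reduces to $-H(\tilde X_\cA^n)-H(\tilde X_{\cA^c}^n|\tilde X_\cA^n,\tilde U_\cA^n)+\sum_j\mathbb{E}[-\log\bPP{X_\cM}(\tilde X_{\cM,j})]$; subadditivity of entropy and the fact that conditioning on $J$ decreases entropy do exactly the cancellation that the chain-rule route over-counts, and the inequality falls in the correct direction. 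Your $\rho^\star$ with an existential $\bQQ$ is mathematically equivalent to the paper's $\eta_{\mathtt{NS}}$ (take $\bQQ=\bPP{\tilde U_\cA|\tilde X_\cA}$), so the fix is to switch to the $I+D$ form and run the paper's entropy argument; as written, the chain-rule-plus-Jensen route does not close.
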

For multiprover games with full support query distributions, Theorem
\ref{t:multiprover_PRT} together with~\eqref{eq:SNS-full-support} implies the parallel
repetition theorem for nonsignaling strategies, shown first
in~\cite{BuhFehSch14}.  

The proof of the multiprover parallel repetition 
theorem for nonsignalign strategies and 
full support query distribution in \cite{BuhFehSch14}
entails extending the proof approach 
for the two-prover setting in \cite{Hol09}. An alternative proof was 
provided in \cite{ArnRenVid16} 
by using a technique based on de Finetti theorem. At a high level, this
technique allows us to restrict attention to convex combinations
of product strategies.  
In \cite{LancienWinter16}, the parallel repetition theorem for
sub-nonsignaling strategies, namely Theorem~\ref{t:multiprover_PRT}, 
was proved  by using another variant of de Finetti theorem. 

In the next section, we provide an alternative proof of Theorem
\ref{t:multiprover_PRT}.  
Our proof is based on a technique recently developed by the authors in
\cite{TyaWat18} to prove strong converse theorems for 
multi-user information theory problems. A crucial observation is that
the parallel repetition theorem can be regarded as  
an exponential strong converse of a multi-user rate-distortion problem
with no communication. In contrast to the proof
in~\cite{LancienWinter16} that uses a structural decomposition of
strategies, our proof is completely ``information theoretic". 
%and is similar to Holenstein's proof in~\cite{Hol09}. 

%%%%%%%%%%%%%%%%%%%%%%%%%%%%%%%%%%%%%%%%%%%%%%%
\section{A New Proof of Theorem~\ref{t:multiprover_PRT}}\label{s:proof}
Our goal is to derive an upper bound for the maximum probability of the
event $\cC$ of winning more than $\Delta$ fraction of games.  
 Following~\cite{TyaWat18}, we start with a change of measure (query
 distribution and provers' 
strategy) by conditioning on $\cC$. The ``distance'' between
the new distribution and the original distribution are bounded in terms of the
exponent of the probability of $\cC$. However, since we have conditioned the
strategy on the winning event, the information structure may break
down -- we are only guaranteed to be ``close'' to a distribution
satisfying our original information constraints. Nonetheless, to
complete the proof we need an appropriate single-letterization argument to
relate this new game to one instance of the original game. 

To enable this, our proof looks at the expected number of
wins instead of the probability of winning. For a given multiprover game $G = (\bPP{X_\cM}, \win)$ and $\delta \ge
0$, define
\begin{align*}
  \eta_{\mathtt{NS}}(G,\delta) &:= \max\big\{
\mathbb{E}[ \win(\tilde{X}_\cM, \tilde{U}_\cM)] : \\
&\hspace{-3mm} I(\tilde{U}_\cA \wedge \tilde{X}_{\cA^c}|\tilde{X}_\cA) + D(\bPP{\tilde{X}_\cM} \| \bPP{X_\cM}) \le \delta, \forall \cA \subsetneq \cM.
\big\}
\end{align*}
Note that the maximum is over the set of distributions, which
we call 
{\em $\delta$-approximate
  nonsignaling distributions}, that satisfy the information structure only
approximately. In particular, we have replaced the hard information
constraints required by nonsignalling strategies by their soft
counterparts expressed by bounds on KL diveregence. Below we shall see
two properties of 
$\eta_{\mathtt{NS}}(G,\delta)$: it tensorizes
and
 can be bounded above roughly by  $\rho_{\mathtt{SNS}}(G)$. We note
 that a linear programming based notion of approximate nonsignaling
 strategies was used in~\cite{Hol09, BuhFehSch14, ArnRenVid16,
   LancienWinter16}. Our divergence based notion of approximation
 is amenable to tensorization and facilitates an information theoretic proof.

We can now apply our proof recipe outlined earlier. 
Under the changed measure obtained by conditioning on $\cC$, the
expected number of wins is more than $n\Delta$. Also, this new measure
satisfies the soft information constraint bound with $\delta$ equal to
the exponent of probability of $\cC$. Thus,
$\eta_{\mathtt{NS}}(G^n,\delta)$ must be more than $n\Delta$. Using
the properties of $\eta_{\mathtt{NS}}(G^n,\delta)$ mentioned earlier, we
can bound it above roughly by  $n\rho_{\mathtt{SNS}}(G)$, which shows
that $\Delta$ must be roughly bounded above by
$\rho_{\mathtt{SNS}}(G)$. The required bound for exponent is obtained
by the contrapositive statement.

 Formal arguments follow. We begin with the tensorization property.
\begin{lemma} \label{lemma:tensorization}
For a given multiprover game $G = (\bPP{X_\cM}, \win)$, $n \in \mathbb{N}$ and $\delta \ge 0$, we have
\begin{align*}
\eta_{\mathtt{NS}}(G^n, n\delta) = n \cdot \eta_{\mathtt{NS}}(G,\delta).
\end{align*}
\end{lemma}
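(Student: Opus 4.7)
The plan is to prove the two inequalities separately. The direction $\eta_{\mathtt{NS}}(G^n,n\delta) \ge n\,\eta_{\mathtt{NS}}(G,\delta)$ is immediate: an optimizer $\bPP{\tilde X_\cM \tilde U_\cM}$ of the single-letter program yields, via the $n$-fold product $\bPP{\tilde X_\cM \tilde U_\cM}^{\otimes n}$, a feasible $n$-fold distribution (both the conditional mutual information and the KL divergence in the constraint are additive under product measures), with $\mathbb{E}[\numwin] = n\,\mathbb{E}[\win] = n\,\eta_{\mathtt{NS}}(G,\delta)$.

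For the reverse direction, given any feasible $\bPP{\tilde X_\cM^n \tilde U_\cM^n}$, I would form the time-shared mixture $\bPP{\bar X_\cM \bar U_\cM} := \frac{1}{n}\sum_{j=1}^n \bPP{\tilde X_{\cM,j} \tilde U_{\cM,j}}$, so that $\mathbb{E}[\win(\bar X_\cM, \bar U_\cM)] = \frac{1}{n}\mathbb{E}[\numwin]$. It then suffices to verify that $\bPP{\bar X_\cM \bar U_\cM}$ is feasible for $\eta_{\mathtt{NS}}(G,\delta)$, i.e.\ $\Phi_\cA(\bPP{\bar X_\cM \bar U_\cM}) \le \delta$ for every $\cA \subsetneq \cM$, where I abbreviate $\Phi_\cA(\bPP{UX}) := I(U_\cA \wedge X_{\cA^c}\mid X_\cA) + \KL{\bPP{X}}{\bPP{X_\cM}}$.

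The argument rests on two ingredients. First, $\Phi_\cA$ admits the variational form $\Phi_\cA(\bPP{UX}) = \min_R \KL{\bPP{UX}}{R \cdot \bPP{X_\cM}}$, with $R$ ranging over Markov kernels $\cX_\cA\to\cU_\cA$; since the inner divergence is jointly convex in $(\bPP{UX},R)$ and partial minimization preserves convexity, $\Phi_\cA$ is convex in $\bPP{UX}$, giving $\Phi_\cA(\bPP{\bar X_\cM \bar U_\cM}) \le \frac{1}{n}\sum_j \Phi_\cA(\bPP{\tilde X_{\cM,j} \tilde U_{\cM,j}})$. Second, I would establish the tensorization bound
\[
\sum_{j=1}^n \Phi_\cA(\bPP{\tilde X_{\cM,j} \tilde U_{\cM,j}}) \le I(\tilde U_\cA^n \wedge \tilde X_{\cA^c}^n \mid \tilde X_\cA^n) + \KL{\bPP{\tilde X_\cM^n}}{\bPP{X_\cM}^n} \le n\delta,
\]
which together with the convexity inequality gives $\Phi_\cA(\bPP{\bar X_\cM \bar U_\cM}) \le \delta$, completing feasibility.

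The main obstacle is this tensorization inequality. The KL piece is standard: $\sum_j \KL{\bPP{\tilde X_{\cM,j}}}{\bPP{X_\cM}} = \KL{\bPP{\tilde X_\cM^n}}{\bPP{X_\cM}^n} - M$, where $M := \sum_j H(\tilde X_{\cM,j}) - H(\tilde X_\cM^n) \ge 0$ is the multi-information of $\tilde X_\cM^n$. For the mutual-information piece, writing $V_j := \tilde U_{\cA,j}$, $W_j := \tilde X_{\cA^c,j}$, $Z_j := \tilde X_{\cA,j}$, an entropy expansion yields
\[
\sum_j I(V_j \wedge W_j \mid Z_j) - I(V^n \wedge W^n \mid Z^n) = M(W^n \mid Z^n) - M(W^n \mid V^n, Z^n),
\]
where $M(A^n\mid B^n) := \sum_j H(A_j\mid B_j) - H(A^n\mid B^n) \ge 0$ is a conditional multi-information (nonnegative since $H(A^n\mid B^n) = \sum_j H(A_j \mid B^n, A^{<j}) \le \sum_j H(A_j\mid B_j)$). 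Dropping the nonnegative subtracted term and using the chain-rule split $M = M(Z^n) + M(W^n\mid Z^n)$ to conclude $M(W^n\mid Z^n) \le M$ shows that this excess mutual information is absorbed exactly by the divergence deficit, proving the tensorization bound.
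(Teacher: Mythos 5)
Your proof is correct. Both your argument and the paper's reduce the reverse inequality to showing that the time-shared mixture $\bar{\mathrm{P}} := \tfrac{1}{n}\sum_{j=1}^n \bPP{\tilde X_{\cM,j}\tilde U_{\cM,j}}$ is a $\delta$-approximate nonsignalling distribution, after which the expected-wins identity finishes the job. Where you differ is in how you verify feasibility of $\bar{\mathrm P}$. The paper does it in one sweep: it introduces a uniform time index $J$, invokes \cite[Prop.~1]{TyaWat18} to lower-bound $I(\tilde U_\cA^n\wedge\tilde X_{\cA^c}^n|\tilde X_\cA^n)+D(\bPP{\tilde X_\cM^n}\|\bPP{X_\cM^n})$, and then uses conditioning-reduces-entropy to land directly at $n\big[I(\tilde U_{\cA,J}\wedge\tilde X_{\cA^c,J}|\tilde X_{\cA,J})+D(\bPP{\tilde X_{\cM,J}}\|\bPP{X_\cM})\big]$. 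You factor this into two cleaner pieces: (i) a per-coordinate superadditivity bound $\sum_j\Phi_\cA(\bPP{\tilde X_{\cM,j}\tilde U_{\cM,j}})\le I^n+D^n$, which you prove by a careful bookkeeping of multi-informations $M(\cdot\,|\,\cdot)$ together with their chain rule and nonnegativity; and (ii) an explicit convexity statement $\Phi_\cA(\bar{\mathrm P})\le\tfrac1n\sum_j\Phi_\cA(\bPP{\tilde X_{\cM,j}\tilde U_{\cM,j}})$, obtained from the variational form $\Phi_\cA(\bPP{\tilde U_\cA\tilde X_\cM})=\min_R D(\bPP{\tilde U_\cA\tilde X_\cM}\,\|\,R\cdot\bPP{X_\cM})$ and the fact that partial minimization of a jointly convex function is convex. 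Both routes rest on the same two workhorses (subadditivity of entropy and conditioning reduces entropy), so this is not a fundamentally different proof; but your explicit variational/convexity lemma is a genuinely illuminating reformulation, and the two-step factorization makes the structure of the single-letterization more transparent than the paper's monolithic entropy chain.
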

\begin{proof}
The inequality $\eta_{\mathtt{NS}}(G^n, n\delta) \ge n
\eta_{\mathtt{NS}}(G,\delta)$ holds by definition. For the other
direction, fix a
$n\delta$-approximate
nonsignalling distribution $\bPP{\tilde{U}_\cM^n
  \tilde{X}_\cM^n}$. We have
\begin{align}
\mathbb{E}[ \numwin(\tilde{X}_\cM, \tilde{U}_\cM)] &= \sum_{j=1}^n \mathbb{E}[ \win(\tilde{X}_{\cM,j}, \tilde{U}_{\cM,j})] \nonumber \\
&= n \mathbb{E}[ \win(\tilde{X}_{\cM,J}, \tilde{U}_{\cM,J})],  \label{eq:additivity}
\end{align}
where $J$ is distributed uniformly on $\{1,\ldots,n\}$. Furthermore,
\begin{align}
n \delta &\ge I(\tilde{U}_\cA^n \wedge \tilde{X}_{\cA^c}^n |
\tilde{X}_\cA^n) + D(\bPP{\tilde{X}_\cM^n} \| \bPP{X_\cM^n}) \nonumber 
%% \\
%% &= H(\tilde{X}_{\cA^c}^n | \tilde{X}_\cA^n) + D(\bPP{\tilde{X}_\cM^n}
%% \| \bPP{X_\cM^n}) - H(\tilde{X}_{\cA^c}^n | \tilde{X}_\cA^n,
%% \tilde{U}_\cA^n) \nonumber 
\\
&\ge n\big[ H(\tilde{X}_{\cA^c,J} | \tilde{X}_{\cA,J}) + D(\bPP{\tilde{X}_{\cM,J}} \| \bPP{X_\cM}) \big] \nonumber \\
&~~~ - \sum_{j=1}^n H(\tilde{X}_{\cA^c,j} | \tilde{X}_\cA^n,\tilde{U}_\cA^n) \nonumber \\
&\ge n\big[ H(\tilde{X}_{\cA^c,J} | \tilde{X}_{\cA,J}) + D(\bPP{\tilde{X}_{\cM,J}} \| \bPP{X_\cM}) \big] \nonumber \\
&~~~ - \sum_{j=1}^n H(\tilde{X}_{\cA^c,j} | \tilde{X}_{\cA,j},\tilde{U}_{\cA,j}) \nonumber \\
&= n\big[ H(\tilde{X}_{\cA^c,J} | \tilde{X}_{\cA,J}) + D(\bPP{\tilde{X}_{\cM,J}} \| \bPP{X_\cM}) \big] \nonumber \\
&~~~ - n H(\tilde{X}_{\cA^c,J} | \tilde{X}_{\cA,J},\tilde{U}_{\cA,J}, J) \nonumber \\
&\ge n \big[ I(\tilde{U}_{\cA,J} \wedge \tilde{X}_{\cA^c,J}|
  \tilde{X}_{\cA,J}) + D(\bPP{\tilde{X}_{\cM,J}} \| \bPP{X_\cM})
  \big], 
\nonumber
%\label{eq:constraint}
\end{align}
where the first inequality follows from \cite[Proposition 1]{TyaWat18}
and the second and the third inequalities hold since conditioning decreases
entropy. Thus,
$\bPP{\tilde{U}_{\cM,J}, \tilde{X}_{\cM,J}}$ is a
$\delta$-approximate nonsignalling distribution and the claim follows by \eqref{eq:additivity}.
\end{proof}

Next, we relate $\eta_{\mathtt{NS}}(G,\delta)$ and
$\rho_{\mathtt{SNS}}(G)$ using Lemma~\ref{lemma:approximate-sub}.
\begin{lemma} \label{lemma:approximate-sub-2}
For a given multiprover game $G = (\bPP{X_\cM}, \win)$ and $\delta \ge 0$, we have
\begin{align*}
\eta_{\mathtt{NS}}(G,\delta) \le \rho_{\mathtt{SNS}}(G) + C_m^\prime \sqrt{(2\ln 2) \delta},
\end{align*}
where the constant $C_m^\prime = \cO(2^m)$ depends only on $m=|\cM|$.
\end{lemma}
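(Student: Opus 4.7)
The plan is to reduce each $\delta$-approximate nonsignalling distribution to a nearby sub-nonsignalling strategy and invoke Lemma~\ref{lemma:approximate-sub}. I would fix any $\delta$-approximate nonsignalling $\bPP{\tilde U_\cM \tilde X_\cM}$ and, for each nonempty $\cA \subsetneq \cM$, take the natural candidate $\bQQ{U_\cA|X_\cA} := \bPP{\tilde U_\cA \mid \tilde X_\cA}$. The goal is to show that $\epsilon_\cA := d_{\mathtt{var}}(\bPP{\tilde U_\cA \tilde X_\cM}, \bPP{X_\cM}\bQQ{U_\cA|X_\cA})$ is $\cO(\sqrt\delta)$, so that Lemma~\ref{lemma:approximate-sub} delivers a sub-nonsignalling surrogate $\mathrm{P}^\prime_{U_\cM|X_\cM}$ close to $\bPP{\tilde U_\cM \tilde X_\cM}$ in total variation.

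For the variational bound I would insert the intermediate product $\bPP{\tilde X_\cM}\bPP{\tilde U_\cA \mid \tilde X_\cA}$ and apply the triangle inequality. The first piece $d_{\mathtt{var}}(\bPP{\tilde U_\cA \tilde X_\cM}, \bPP{\tilde X_\cM}\bPP{\tilde U_\cA|\tilde X_\cA})$ is controlled by Pinsker's inequality, since the KL divergence between these two joints is exactly $I(\tilde U_\cA \wedge \tilde X_{\cA^c} \mid \tilde X_\cA)$. The second piece collapses to $d_{\mathtt{var}}(\bPP{\tilde X_\cM}, \bPP{X_\cM})$ once the common conditional kernel cancels, and Pinsker again bounds it by a square root of $D(\bPP{\tilde X_\cM}\|\bPP{X_\cM})$. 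Combining via $\sqrt a + \sqrt b \le \sqrt{2(a+b)}$ and the soft constraint $I(\tilde U_\cA \wedge \tilde X_{\cA^c} \mid \tilde X_\cA) + D(\bPP{\tilde X_\cM}\|\bPP{X_\cM}) \le \delta$ produces $\epsilon_\cA \le \sqrt{(\ln 2)\delta}$ for every nonempty $\cA \subsetneq \cM$, together with $\epsilon_\emptyset \le \sqrt{(\ln 2/2)\delta}$ handled directly by the second piece alone.

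Feeding these estimates into Lemma~\ref{lemma:approximate-sub} yields a sub-nonsignalling $\mathrm{P}^\prime_{U_\cM|X_\cM}$ satisfying $d_{\mathtt{var}}(\bPP{\tilde U_\cM \tilde X_\cM}, \bPP{X_\cM}\mathrm{P}^\prime_{U_\cM|X_\cM}) \le \epsilon_\emptyset + 2 \sum_{\emptyset \neq \cA \subsetneq \cM} \epsilon_\cA \le C_m^\prime \sqrt{(2 \ln 2)\delta}$ with $C_m^\prime = \tfrac{1}{2} + \sqrt{2}(2^m - 2) = \cO(2^m)$, since $\cM$ has $2^m - 2$ nonempty proper subsets. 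Because $\win \in \{0,1\}$, the expected wins under the two measures differ by at most this variational distance, while $\mathbb{E}_{\bPP{X_\cM}\mathrm{P}^\prime_{U_\cM|X_\cM}}[\win(X_\cM, U_\cM)] \le \rho_{\mathtt{SNS}}(G)$ by definition. Taking the supremum over admissible $\bPP{\tilde U_\cM \tilde X_\cM}$ then gives the claim.

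I do not foresee any deep obstacle: all the needed tools, namely Pinsker's inequality, the triangle inequality, and Lemma~\ref{lemma:approximate-sub}, are either standard or already established in the paper. The only mild bookkeeping lies in distributing the $\delta$-budget across the two halves of the triangle inequality and tracking the $2^m$ combinatorial blowup over subsets $\cA$, both absorbed into the final constant $C_m^\prime = \cO(2^m)$. The key conceptual point is that the divergence-based formulation of approximate nonsignalling is tailored precisely so that Pinsker converts the soft information constraints into the variational hypotheses required by Lemma~\ref{lemma:approximate-sub}.
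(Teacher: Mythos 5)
Your proposal is correct and follows essentially the same route as the paper: convert the divergence constraints to total‐variation bounds via Pinsker, invoke Lemma~\ref{lemma:approximate-sub} with $\bQQ{U_\cA|X_\cA}=\bPP{\tilde U_\cA|\tilde X_\cA}$ to obtain a nearby sub‑nonsignalling strategy, and then pass the $\{0,1\}$-valued payoff through the variational bound to compare with $\rho_{\mathtt{SNS}}(G)$. The only cosmetic difference is bookkeeping of the Pinsker constant: the paper effectively uses the chain rule to collapse $I(\tilde U_\cA\wedge\tilde X_{\cA^c}\mid\tilde X_\cA)+D(\bPP{\tilde X_\cM}\Vert\bPP{X_\cM})$ into a single divergence before applying Pinsker once, whereas you triangle‑split and apply Pinsker twice with $\sqrt a+\sqrt b\le\sqrt{2(a+b)}$; both yield $C_m^\prime=\cO(2^m)$.
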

\begin{proof}
Consider a $\delta$-approximate
nonsignalling distribution $\bPP{\tilde{U}_\cM
  \tilde{X}_\cM}$. For any $\cA \subsetneq \cM$, since $I(\tilde{U}_\cA \wedge
\tilde{X}_{\cA^c} | \tilde{X}_\cA) = D(\bPP{\tilde{U}_\cA
  \tilde{X}_\cM} \| \bPP{\tilde{X}_\cM}
\bPP{\tilde{U}_\cA|\tilde{X}_\cA} )\le \delta$ 
and $D(\bPP{\tilde{X}_\cM} \| \bPP{X_\cM}) \le \delta$, by using Pinsker's
inequality \cite{CsiKor11} and the triangle inequality, 
 we get
\begin{align*}
d_{\mathtt{var}}(\bPP{\tilde{U}_\cA \tilde{X}_\cM}, \bPP{X_\cM} \bPP{\tilde{U}_\cA|\tilde{X}_\cA}) \le \sqrt{(2\ln 2) \delta}.
\end{align*}
Next, by applying Lemma \ref{lemma:approximate-sub} with $\ep_\cA
=  \sqrt{(2\ln 2) \delta}$, there exists a sub-nonsignaling strategy $\mathrm{P}^\prime_{U_\cM|X_\cM}$ such that  
\begin{align*}
d_{\mathtt{var}}(\bPP{\tilde{U}_\cM \tilde{X}_\cM}, \bPP{X_\cM} \mathrm{P}^\prime_{U_\cM|X_\cM}) \le (2^{|\cM|+1}-3)  \sqrt{(2\ln 2) \delta}.
\end{align*}
Finally, since $\win$ is bounded by $1$, we have 
\begin{align*}
&\lefteqn{ \mathbb{E}_{\bPP{\tilde{X}_\cM \tilde{U}_\cM}} [ \win(X_\cM, U_\cM)] }
\\
&\leq
\mathbb{E}_{\bPP{X_\cM}\mathrm{P}^\prime_{\tilde{U}_\cM|\tilde{X}_\cM}}[ \win(X_\cM, U_\cM)]
\\&\hspace{2cm}
+2 d_{\mathtt{var}}(\bPP{\tilde{U}_\cM \tilde{X}_\cM}, \bPP{X_\cM} \mathrm{P}^\prime_{U_\cM|X_\cM}) \\
&\le \rho_{\mathtt{SNS}}(G) + 2 (2^{|\cM|+1}-3)  \sqrt{(2\ln 2) \delta},
\end{align*}
where the final inequality uses the fact that
$\mathrm{P}^\prime_{\tilde{U}_\cM|\tilde{X}_\cM}$ is sub-nonsignalling. We obtain the claimed
bound with $C_m^\prime = 2 (2^{m+1}-3)$ since $\bPP{\tilde{U}_\cM
  \tilde{X}_\cM}$ was an arbitrary $\delta$-approximate nonsignalling distribution.
\end{proof}

We have all the tools for the proof of Theorem~\ref{t:multiprover_PRT} in place.
\paragraph*{Proof of Theorem \ref{t:multiprover_PRT}} 
If $\rho_{\mathtt{SNS}}(G^n,\Delta) > \exp(- n \delta)$, we can 
find a sub-nonsignalling strategy $\bPP{U_\cM^n|X_\cM^n}$ such
that $\bPr{ N_\omega(U_\cM, X_\cM^n) \geq n\Delta}> \exp(-n\delta)$ for some $\delta > 0$.
Denoting
\begin{align*}
\cC = \big\{ (u_\cM^n, x_\cM^n) : \numwin(x_\cM^n, u_\cM^n) \ge n \Delta \big\},
\end{align*}
we change the measure by conditioning on the event $(U_\cM^n, X_\cM^n)
\in \cC$ as follows:\footnote{Although $\bPP{U_\cM X_\cM}$ is only a
  subnormalized distribution, the changed measure $\bPP{\tilde{U}_\cM
    \tilde{X}_\cM}$ is a distribution.}
\begin{align*}
\mathrm{P}_{\tilde{U}_\cM^n \tilde{X}_\cM^n}(u_\cM^n, x_\cM^n)
 = \frac{\mathrm{P}_{U_\cM^n X_\cM^n}(u_\cM^n,x_\cM^n) \mathbf{1}[(u_\cM^n,x_\cM^n) \in \cC)]}{\mathrm{P}_{U_\cM^n X_\cM^n}(\cC)}.
\end{align*}
Then, by a simple calculation, we have
\begin{align*} 
D(\bPP{\tilde{U}_\cM^n \tilde{X}_\cM^n} \| \bPP{U_\cM^n X_\cM^n}) = \log \frac{1}{\mathrm{P}_{U_\cM^n X_\cM^n}(\cC)} 
\le n \delta.
\end{align*}
Furthermore, for each $\cA \subsetneq \cM$, 
denoting by $\bQQ{U_\cA^n|X_\cA^n}$ the dominating conditional
distribution for the sub-nonsignaling strategy $\bPP{U_\cA^n|X_\cA^n}$ (cf.~\eqref{eq:sub-nonsignaling-condition}),
we have
\begin{align*}
\lefteqn{
I(\tilde{U}_\cA^n \wedge \tilde{X}_{\cA^c}^n | \tilde{X}_\cA^n) + D(\bPP{\tilde{X}_\cM^n} \| \bPP{X_\cM^n}) 
} \\
&\le I(\tilde{U}_\cA^n \wedge \tilde{X}_{\cA^c}^n | \tilde{X}_\cA^n) + D(\bPP{\tilde{U}_\cA^n|\tilde{X}_\cA^n} \| \bQQ{U_\cA^n|X_\cA^n} | \bPP{\tilde{X}_\cA^n}) \\
&~~~ + D(\bPP{\tilde{X}_\cM^n} \| \bPP{X_\cM^n}) \\
&= D(\bPP{\tilde{U}_\cA^n|\tilde{X}_\cM^n} \| \bPP{\tilde{U}_\cA^n|\tilde{X}_\cA^n} | \bPP{\tilde{X}_\cM^n})
 + D(\bPP{\tilde{U}_\cA^n|\tilde{X}_\cA^n} \| \bQQ{U_\cA^n|X_\cA^n} | \bPP{\tilde{X}_\cM^n}) \\
 &~~~ + D(\bPP{\tilde{X}_\cM^n} \| \bPP{X_\cM^n}) \\
&= D(\bPP{\tilde{U}_\cA^n|\tilde{X}_\cM^n} \| \bQQ{U_\cA^n|X_\cA^n} | \bPP{\tilde{X}_\cM^n}) + D(\bPP{\tilde{X}_\cM^n} \| \bPP{X_\cM^n}) \\
&\le D(\bPP{\tilde{U}_\cA^n|\tilde{X}_\cM^n} \| \bPP{U_\cA^n|X_\cM^n} | \bPP{\tilde{X}_\cM^n}) + D(\bPP{\tilde{X}_\cM^n} \| \bPP{X_\cM^n}) \\
&= D( \bPP{\tilde{U}_\cA^n \tilde{X}_\cM^n} \| \bPP{U_\cA^n X_\cM^n}) \\
&\le D( \bPP{\tilde{U}_\cA^n \tilde{X}_\cM^n} \| \bPP{U_\cA^n X_\cM^n}) \\
&~~~ + D(\bPP{\tilde{U}_{\cA^c}^n|\tilde{U}_\cA^n \tilde{X}_\cM^n} \| \bPP{U_{\cA^c}^n|U_\cA^n X_\cM^n} | \bPP{\tilde{U}_\cA^n \tilde{X}_\cM^n}) \\
&= D(\bPP{\tilde{U}_\cM^n \tilde{X}_\cM^n} \| \bPP{U_\cM^n X_\cM^n}) \\
&\le n \delta,
\end{align*}
where the second inequality follows from the sub-nonsignaling
condition \eqref{eq:sub-nonsignaling-condition} and the third inequality uses
the fact that $\bPP{U_{\cA^c}^n|U_\cA^n X_\cM^n}$ is a conditional distribution. The above bound implies that
the changed measure $\mathrm{P}_{\tilde{U}_\cM^n \tilde{X}_\cM^n}$ is $\delta$-approximate nonsignalig distribution.
Furthermore, since $\numwin(\tilde{X}_\cM,\tilde{U}_\cM^n) \ge n \Delta$ holds with probability $1$ under the changed measure $\bPP{\tilde{U}_\cM^n \tilde{X}_\cM^n}$, we have
\begin{align*}
n \Delta \le \mathbb{E}[ \numwin(\tilde{U}_\cM^n, \tilde{X}_\cM^n)] 
\le \eta_\mathtt{NS}(G^n, n \delta),
\end{align*}
which together with Lemma \ref{lemma:tensorization} and Lemma \ref{lemma:approximate-sub-2} implies 
\begin{align*}
\Delta \le \eta_{\mathtt{NS}}(G, \delta) 
\le \rho_{\mathtt{SNS}}(G) + C^\prime_m \sqrt{(2 \ln 2) \delta}.
\end{align*}
By considering contraposition, if 
\begin{align} \label{eq:violation-Delta}
\Delta > \rho_{\mathtt{SNS}}(G) + C^\prime_m \sqrt{(2 \ln 2) \delta},
\end{align}
then we have $\rho_{\mathtt{SNS}}(G^n,\Delta) \le \exp(- n \delta)$.
Thus, by setting $\delta = \frac{\nu^2}{(2\ln 2) (C_m^\prime+1)^2}$, $\Delta \ge \rho_{\mathtt{SNS}}(G) + \nu$ implies \eqref{eq:violation-Delta},
and we have the claim of the theorem. \qed

%%%%%%%%%%%%%%%%%%%%%%%%%%%%%%%%%%%%%%%%%%%%%%%
\section{Discussion}

%Even though our new proof is considerably different from the previous proofs in technical detail, 
%we shall point out conceptual similarities between our proof and the proofs of the two-prover parallel repetition theorem in \cite{Raz98,Hol09}.
%The proofs in \cite{Raz98,Hol09} involve a step in which we evaluate the winning probability of a coordinate 
%conditioned on the winning events of some other coordinates; at a high level, this step can be regarded as
%a measure change argument. The proofs are completed by bounding the winning
%probability of the changed measure recursively, which can be regarded as a single-letterization step.

A multiprover parallel repetition theorem for standard strategies,
i.e., strategies satisfying the hidden 
variable theory, is not available. In fact, our initial attempt 
in this work was to provide an alternative proof of the two-prover
parallel repetition theorem for the standard strategies. 
We tried to prove a counterpart of the tensorization property,
Lemma \ref{lemma:tensorization}, for standard strategies. However, 
our preliminary attempt failed, mainly because it was difficult to
identify a suitable soft constraint for the long Markov chain 
in~\eqref{eq:value-long-markov}. Nonetheless, we do believe that 
our measure change approach can be used to obtain a parallel repetition
theorem for standard strategies, 
perhaps by proving an approximate tensorization property of the value
function with suitable soft constraints.

%%%%%%%%%%%%%%%%%%%%%%%%%%%%%%%%%%%%%%%%%%%%%%%
\bibliography{IEEEabrv,references} \bibliographystyle{IEEEtranS}

%%%%%%%%%%%%%%%%%%%%%%%%%%%%%%%%%%%%%%%%%%%%%%%
\end{document}